\newcommand{\be}{\begin{equation}}
\newcommand{\ee}{\end{equation}}
\newcommand{\bea}{\begin{eqnarray}}
\newcommand{\eea}{\end{eqnarray}}
\newcommand{\Rh}{{\mathbb R}}
\DeclareMathOperator{\St}{St}
\theoremstyle{remark}
\newtheorem*{remark}{Remark}
\theoremstyle{plain}
\newtheorem*{theorem}{Theorem}
\newtheorem{proposition}{Proposition}
\newtheorem{lemma}{Lemma}
\newtheorem*{corollary}{Corollary}
\begin{document}
\title{
\vspace{1cm} {\bf Derivation of the Boltzmann equation and entropy production in functional mechanics}
 }
\author{A.\,S.~Trushechkin\bigskip
 \\
{\it  Steklov Mathematical Institute}
\\ {\it Gubkina St. 8, 119991 Moscow, Russia}\medskip\\
{\it  National Research Nuclear University ``MEPhI''}
\\ {\it Kashirskoe Shosse 31, 115409 Moscow, Russia}\bigskip
\\ email:\:\texttt{trushechkin@mi.ras.ru}}

\date {}
\maketitle

\begin{abstract}
A derivation of the Boltzmann equation from the Liouville equation by the use of the Grad limiting procedure in a finite volume  is proposed.
We introduce two scales of space-time: macro- and microscale and use 
the BBGKY hierarchy and the functional formulation of classical mechanics.
According to the functional approach to mechanics, a state of a system of particles is formed from the measurements, which are 
rational numbers. Hence, one can speak about the accuracy of the initial probability 
density function
in the Liouville equation. We assume that the initial data for the microscopic density
functions are assigned by the macroscopic one (so, one can say about a kind of hierarchy and subordination of the microscale to the macroscale) and derive
the Boltzmann equation, which leads to the entropy production.

\end{abstract}

\section{Introduction}

The report concerns the problem of derivation of the kinetic Boltzmann equation from the equations of microscopic dynamics (the Liouville equation). The main interest is related with the fact that the Liouville equation is reversible in time, while the  Boltzmann equation does not. On the contrary, for the Boltzmann equation the so-called $H$-theorem is valid, so, it describes the entropy production and time irreversibility.

The fundamental time irreversibility problem (the Loschmidt's paradox) consists in the following: how to reconcile the time-reversible microscopic dynamics and the irreversible macroscopic one. One of the problems in the framework of the irreversibility problem is the derivation of the Boltzmann equation from the Liouville equation.

An elegant derivation of the Boltzmann equation from the Liouville equation has been proposed by Bogolyubov \cite{Bogol}. He uses the BBGKY (Bogolyubov--Born--Green--Kirkwood--Yvon) hierarchy of equations, thermodynamic limit and some additional assumptions (which do not follow from the Liouville equation). This derivation leads to the divergences in high order corrections to the Boltzmann equation \cite{Bogol77}.

Another derivation of the Boltzmann equation was proposed by Lanford \cite{Lanford}. He also uses the BBGKY hierarchy of equations, the Grad (or Boltzmann--Grad) limit, all assumptions are included in the initial conditions. But the derivation can be applied only to small times. On the contrary, the Boltzmann equation is interesting from the viewpoint of the large time asymptotics (the relaxation to the Maxwell distribution).

We propose a new derivation of the Boltzmann equation. The BBGKY hierarchy of equations are supplemented with the ideas of the functional mechanics, recently proposed by I.\,V.~Volovich \cite{VolFuncMech,VolRand}. Also we introduce two scales of space-time: a macro- and a microscale (a kinetic scale).

A kinetic equation for the system of two particles has been obtained in \cite{VolBogol}. But the obtained equation is time-reversible, as the Liouville equation. In the present work we obtain the irreversible Boltzmann equation, which leads to the entropy production.

\section{Liouville and Boltzmann equations}

Let $N$ particles in a region $G\subset\mathbb
R^3$ with the volume $V$ be given. Their state in an arbitrary moment of time $t$ is described by a function $f(x_1,x_2,\ldots,x_N,t)$, where
$x_i=(q_i,p_i)$, $q_i\in G$ (a position), $p_i\in\mathbb R^3$ (a momentum), $i=1,2,\ldots,N$. Let the normalization condition be $\int f\,dx_1\ldots dx_N=V^N$, i.e.,
$\frac1{V^N}f$ is a probability density function of the $N$-particle system. The dynamics of the function $f$ is given by the Liouville equation
\begin{equation}\label{EqLiouv}
\frac{\partial f}{\partial t}=\{H,f\},
\end{equation}
where
\begin{equation}\label{EqH}
H(q,p)=\sum_{i=1}^N\frac{p_i^2}{2m}+
\sum_{\begin{smallmatrix}i,j=1\\i>j\end{smallmatrix}}^N\Phi(\frac{|q_i-q_j|}\mu)+\sum_{i=1}^NU(q_i)
\end{equation}
is a Hamiltonian of the system, $m>0$ is the mass of a single particle, $\{\cdot,\cdot\}$ are the Poisson brackets. The first term in $H$ corresponds to the kinetic energy of the system, $\Phi(r)$ is the interaction potential of the particles, $U(q_i)$ is an external potential.

We assume that the function $\Phi(r)$ is continuously differentiable, bounded from below, $\Phi(r)\to+\infty$ as $r\to0$, $\Phi(r)\to0$ as $r\to\infty$. Also for simplicity we will assume that $\Phi(r)$ monotonically decreases as $r$ increases (which corresponds to a repelling force). Let the function $U$ be continuously differentiable, $U(q)\to+\infty$ as $q\to\partial G$, where $\partial G$ is the boundary of the region $G$ (so, the external potential does not allow the particles to go beyond $G$).

Further, $\mu>0$ is a small dimensionless parameter. It has a sense of the relation of the interaction radius $r_0$ to the mean free path $l$, i.e., $\mu=\frac{r_0}l\ll1$. This parameter in the Hamiltonian means that the particles interact on the scale much smaller than the scale of kinetic events. 

Time reversibility of the Liouville equation is expressed in terms of the following
\begin{proposition}
Let $f(q,p,t)$ be a solution of the Liouville equation (\ref{EqLiouv}). Let the Hamiltonian satisfy the equality $H(q,p)=H(q,-p)$ (for example, the Hamiltonian has the form (2)). Then $f(q,-p,-t)$ is also a solution of the Liouville equation.
\end{proposition}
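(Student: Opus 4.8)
The plan is a direct substitution. Set $g(q,p,t):=f(q,-p,-t)$ and check that $g$ solves the Liouville equation (\ref{EqLiouv}). The first step is to record how the combined reflection $p\mapsto -p$, $t\mapsto -t$ acts on partial derivatives: by the chain rule, with all right-hand sides evaluated at $(q,-p,-t)$, one has $\pd g/\pd q_i=\pd f/\pd q_i$, $\pd g/\pd p_i=-\pd f/\pd p_i$, and $\pd g/\pd t=-\pd f/\pd t$. So the $q$-derivatives are ``even'' and the $p$- and $t$-derivatives are ``odd'' under this operation.

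The second step uses the hypothesis $H(q,p)=H(q,-p)$. Differentiating this identity in $q_i$ and in $p_i$ respectively gives $(\pd H/\pd q_i)(q,-p)=(\pd H/\pd q_i)(q,p)$ and $(\pd H/\pd p_i)(q,-p)=-(\pd H/\pd p_i)(q,p)$; that is, the $q$-gradient of $H$ is even in $p$ and the $p$-gradient of $H$ is odd in $p$. The third step substitutes both sets of relations into the Poisson bracket $\{H,f\}$ evaluated at $(q,-p,-t)$. In each of the $N$ summands $(\pd H/\pd q_i)(\pd f/\pd p_i)-(\pd H/\pd p_i)(\pd f/\pd q_i)$ exactly one factor flips sign (in the first product the $p$-derivative of $f$ is odd while the $q$-derivative of $H$ is even; in the second the $p$-derivative of $H$ is odd while the $q$-derivative of $f$ is even), so $\{H,f\}\big|_{(q,-p,-t)}=-\{H,g\}\big|_{(q,p,t)}$. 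Combining this with $\pd g/\pd t=-\pd f/\pd t\big|_{(q,-p,-t)}=-\{H,f\}\big|_{(q,-p,-t)}$, the two sign changes cancel and $\pd g/\pd t=\{H,g\}$, which is the claim.

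There is no genuine obstacle: the proposition is a bookkeeping exercise, and the only point needing care is tracking the two independent sources of sign, namely the reversal of $t$ (which alone would send $\pd_t f=\{H,f\}$ to $\pd_t f=-\{H,f\}$) and the reversal of $p$ together with the evenness of $H$ (which alone also replaces $\{H,\cdot\}$ by its negative). Doing both restores the original equation. I would also note in passing that $f(q,p,t)\mapsto f(q,-p,-t)$ is an involution on the solution space, which makes precise the time-reversibility of the Liouville dynamics and sharpens the contrast with the irreversible Boltzmann equation discussed below.
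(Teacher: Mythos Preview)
Your argument is correct: the chain-rule bookkeeping for $g(q,p,t)=f(q,-p,-t)$, combined with the parity of $\pd H/\pd q$ and $\pd H/\pd p$ implied by $H(q,p)=H(q,-p)$, gives exactly one sign flip in each summand of the Poisson bracket, so $\{H,f\}\big|_{(q,-p,-t)}=-\{H,g\}\big|_{(q,p,t)}$, and together with $\pd_t g=-\pd_t f\big|_{(q,-p,-t)}$ this yields $\pd_t g=\{H,g\}$.

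As for comparison: the paper does not supply a proof of this proposition at all --- it is stated as a standard fact and then the discussion moves on. Your direct-substitution proof is precisely the routine verification one would expect, and there is nothing to compare beyond noting that you have filled in what the paper left implicit.
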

Note, however, that in spite of the formal reversibility, the solutions of the Liouville equation obey the property of delocalization which can be regarded as irreversibility \cite{VolFuncMech,VolRand}.

Consider now a gas from the point of view of kinetic theory. In this case the gas is described in terms of a single-particle density function $f(q,p,t)$. Let it be normalized on the volume $V$ (i.e., $f(q,p,t)/V$ is a probability density function), $q\in G$, $p\in\Rh^3$. One of the fundamental equations of kinetic theory is the Boltzmann equation:

\begin{equation}\label{EqBoltz}
\frac{\partial f}{\partial t}=-\frac{p}m\frac{\partial f}{\partial q}+\frac{\partial U}{\partial q}\frac{\partial f}{\partial p}+\St f,\end{equation}
\begin{equation*}
\St f=n\int_{\Rh^2\times\Rh^3}\frac{|p-p_1|}m
[f(q,p',t)f(q,p'_1,t)-f(q,p,t)f(q,p_1,t)]d\sigma dp_1,\end{equation*}
where $n>0$ is the mean concentration of the particles (the mean number of particles in a unit of volume, $d\sigma=rdrd\varphi$. Also $p'$ and $p'_1$ are the momenta that will have two particles long after the collision provided that they had the momenta $p$ and $p_1$ long before the collision with the impact parameter of the collision $r$ and the polar angle $\varphi$. So, $(r,\varphi)\in\Rh^2$ are polar coordinates on the plane perpendicular to the relative velocity vector $(p-p_1)/m$. Thus, $p'=p'(p,p_1,r,\varphi)$, $p'_1=p'_1(p,p_1,r,\varphi)$, the dependence is defined by the two-particle Hamiltonian without the external potential 
\begin{equation}\label{EqH02}
H^0_2=\frac{p_1^2}{2m}+\frac{p_2^2}{2m}+\Phi(\frac{|q_1-q_2|}\mu).
\end{equation}
The expression $\St f$ is called the collision integral.

This is an important nonlinear equation which describes the relaxation of the function $f$ to the Maxwell distribution \cite{Villani}. One of the properties which can be easily proved is the so-called Boltzmann $H$-theorem:

\begin{proposition}[$H$-theorem]
Let f(q,p,t) be a solution of the Boltzmann equation (\ref{EqBoltz}) and the quantity
$$\mathcal H(t)=\int_{\Omega_V}f(q,p,t)\ln f(q,p,t)\,dqdp,$$
$\Omega_V=G\times\Rh^3$, be well-defined (i.e., the integral converges). Then $\frac{d\mathcal H}{dt}\geq0$.
\end{proposition}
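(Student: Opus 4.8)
The plan is to differentiate $\mathcal H$ under the integral sign, substitute the Boltzmann equation (\ref{EqBoltz}), and reduce everything to the collision term. I would begin from $\frac{d\mathcal H}{dt}=\int_{\Omega_V}(\ln f+1)\,\frac{\partial f}{\partial t}\,dqdp$ (justified by differentiation under the integral once suitable decay of $f$ and $\frac{\partial f}{\partial t}$ in $p$ is assumed), then substitute (\ref{EqBoltz}) and handle the streaming terms first. Using $(\ln f+1)\frac{\partial f}{\partial q}=\frac{\partial}{\partial q}(f\ln f)$ and $(\ln f+1)\frac{\partial f}{\partial p}=\frac{\partial}{\partial p}(f\ln f)$, together with the fact that $p/m$ and $\frac{\partial U}{\partial q}$ do not depend on $q$ and $p$ respectively, the two transport terms become exact divergences. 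Integrating by parts, the boundary contributions vanish: in $q$ because $U\to+\infty$ on $\partial G$ forces $f\to0$ there, and in $p$ by decay at infinity. Hence the streaming part drops out and $\frac{d\mathcal H}{dt}=\int_{\Omega_V}(\ln f+1)\,\St f\,dqdp$.

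Next I would symmetrize the collision term. With the shorthand $f=f(q,p)$, $f_1=f(q,p_1)$, $f'=f(q,p')$, $f'_1=f(q,p'_1)$, the argument rests on two invariances of the measure $\frac{|p-p_1|}m\,d\sigma\,dp_1\,dp$ at fixed $q$: under the exchange $p\leftrightarrow p_1$ (with $p'\leftrightarrow p'_1$), and under the two-body collision map $(p,p_1)\mapsto(p',p'_1)$ determined by $H^0_2$ in (\ref{EqH02}), which I would show is a measure-preserving involution with $|p-p_1|=|p'-p'_1|$. Applying the exchange and averaging turns the factor $\ln f+1$ into $\tfrac12\ln(ff_1)+1$; applying the collision map and averaging once more leads to the symmetric form
\[
\frac{d\mathcal H}{dt}=\frac n{4m}\int_{G}\!\int |p-p_1|\,\big(\ln(f'f'_1)-\ln(ff_1)\big)\,(f'f'_1-ff_1)\,d\sigma\,dp_1\,dp\,dq .
\]
Since $(\ln b-\ln a)(b-a)\ge0$ for all $a,b>0$, with $a=ff_1$ and $b=f'f'_1$ the integrand is pointwise nonnegative, which gives $\frac{d\mathcal H}{dt}\ge0$, with equality exactly on the set $f'f'_1=ff_1$ (the local Maxwellians).

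The hard part is the collision map and, above all, the sign bookkeeping in the double symmetrization, since it is precisely here that $\ge0$ is distinguished from the reverse inequality. Concretely, I would need to establish from the dynamics of $H^0_2$ that $(p,p_1,r,\varphi)\mapsto(p',p'_1,r,\varphi)$ is a bijection of unit Jacobian, that it is an involution interchanging the ``long before'' and ``long after'' momenta, and that $|p-p_1|$ is conserved; these are scattering facts for the repulsive potential $\Phi$ and form the analytic core. Whether the involution carries $f'f'_1-ff_1$ to itself or to its negative --- equivalently, which of $p,p'$ is the post-collision momentum entering the gain term $f'f'_1$ --- is exactly what fixes the sign, so the orientation convention for $p'$, $p'_1$ adopted after (\ref{EqH02}) must be tracked with care. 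Finally I would revisit the standing regularity hypotheses (convergence of $\mathcal H$, decay of $f$ as $|p|\to\infty$, vanishing of $f$ on $\partial G$) to legitimize the differentiation under the integral, the integrations by parts, and the use of Fubini in the symmetrization.
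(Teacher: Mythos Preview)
The paper does not actually prove this proposition; it is stated as a well-known fact (``One of the properties which can be easily proved is the so-called Boltzmann $H$-theorem''), so there is nothing to compare against. Your outline is the classical textbook argument and is structurally sound: kill the streaming terms by writing them as divergences, then symmetrize the collision contribution using the exchange $p\leftrightarrow p_1$ and the collision involution.

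There is, however, a genuine sign slip in your symmetrization. When you apply the collision map $(p,p_1)\mapsto(p',p_1')$, the factor $f'f_1'-ff_1$ changes sign, so after averaging one obtains
\[
\frac{d\mathcal H}{dt}=-\frac{n}{4m}\int_G\!\int |p-p_1|\,\bigl(\ln(f'f_1')-\ln(ff_1)\bigr)\,(f'f_1'-ff_1)\,d\sigma\,dp_1\,dp\,dq\;\le\;0,
\]
with a minus sign in front that your displayed formula lacks. You rightly flagged this step as the one that fixes the sign, but carrying it through gives the opposite inequality to the one you wrote. In fact the paper's own stated inequality $\frac{d\mathcal H}{dt}\ge 0$ appears to be a misprint: the sentence immediately following the proposition says that $S=-\mathcal H$ is the entropy and that the proposition ``states the entropy production'', which requires $\frac{dS}{dt}\ge 0$, i.e., $\frac{d\mathcal H}{dt}\le 0$. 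So your sign error happens to match the paper's typo; the correct conclusion, consistent with the paper's physical interpretation and with the standard $H$-theorem, is $\frac{d\mathcal H}{dt}\le 0$.
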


This proposition states the entropy production (the quantity $S(t)=-\mathcal H(t)$ can be regarded as the entropy of the gas) and, hence, the irreversible character of the gas dynamics.

Thus, we have obtained two contradictory conclusions: if we consider the gas as a whole (in terms of a single-particle distribution and the Boltzmann equation), the dynamics is irreversible in time, while if we consider the gas as a system of a finite number $N$ of particles (in terms of an $N$-particle distribution function and the Liouville equation), the dynamics is reversible.

It seems that the reductionism does not work, the time irreversibility is a property of macrosystems which cannot be reduced to the microscopic level. So, the problem about another type of relation between the different levels of description arises.

Let us return to the Liouville equation. According to Bogolyubov, let us define the $s$-particle distribution functions $f_s$, $s=1,2,\ldots,N-1$:

$$
f_s(x_1,\ldots,x_s,t)=V^{N-s}\int_{\Omega_V^{N-s}}f(x_1,\ldots,x_N,t)dx_{s+1}\ldots dx_N,
$$
where $\Omega_V=G\times\mathbb R^3$ is the phase space of a single particle, $f_N\equiv f$. Usually, the function $f$ in the Boltzmann equation is associated with the single-particle function $f_1$. This is a common point in the derivations of the Boltzmann equation according to Bogolyubov, Lanford and others. We also follow this idea, but supplement it with the idea of subordination of different space-time scales.

\section{Micro- and macroscale}

Both Bogolyubov and Lanford start with the Cauchy problem for the Liouville equation:

\begin{equation}\label{EqLiouvCauchi}
\left\lbrace\begin{aligned}
&\frac{\partial f}{\partial t}=\{H,f\},\\
&f(x_1,\ldots,x_N,0)=f^0(x_1,\ldots,x_N),
\end{aligned}\right.
\end{equation}
But we should keep in mind that the initial distribution function $f^0$ is not given ``objectively'', but constructed based on measurement results. Note that the measurement results are rational numbers (this is a starting point of p-adic mathematical physics \cite{VVZ,DragovichReview}). See \cite{TrVolFuncRat} for the detailed description of the construction of the probability density function starting from the rational-valued measurement results. See also \cite{TrushTMF10} for the discussion of the functional dynamics of a system under often measurements.

The accuracy of the  measurements is essential here. If our measurement instruments allow us to register variations of the physical values (and, hence, the initial distribution function $f^0$) on the scale of particle interaction radius, the solution of the Cauchy problem has a physical meaning.

But in practice, if we consider the kinetic events, our measurement instruments can register variations of the initial distribution function only on the scales  much larger than the interaction radius \cite{Bogol,Landau10}. The scale of order of the interaction radius $r_0$ will be called ``microscale'', the scale of order of the mean free path $l$ -- ``macroscale'' (or ``kinetic scale''). $\mu=\frac{r_0}l\ll1$ is a scaling parameter. Our measurement instruments can register the variations of the physical values only on the kinetic scale.

In this case we do not know the $N$-particle distribution function $f^0(x_1,\ldots,x_N)$, 
since it reflects the information about correlations of the particles on the scale of order of the interaction radius, which cannot be registered. This information is  essential. For example, if the distance between the particles is microscopically large ($|q_i-q_j|\gg r_0$ for all $i\neq j$), there is no correlation between them and the following factorization property is satisfied:
$$f^0(x_1,x_2,\ldots,x_s,0)=\prod_{i=1}^Nf^0_1(x_i,0).$$
Exactly this  property will be registered by the instrument,  since it can register only the macroscopic variations. However, if we extrapolate this property over the whole phase space, we obtain an infinite mean energy
$E=\frac1{V^N}\int_{\Omega_V} Hf\,dx_1\ldots dx_N$
(since $\Phi(r)\to+\infty$ as $r\to0$).

So, the correlations between the particles are essential on the microscale, but the instrument cannot register them. Hence, we do not know the initial $N$-particle distribution function $f^0$ and the Cauchy problem (\ref{EqLiouvCauchi}) has not a direct physical meaning.

\section{The main theorem}
Let us formulate another problem for the Liouville equation which will be in accordance with the physical meaning. The instrument allows us to establish the initial single-particle distribution function $f_1^0(x_1)$. One can say that $f_1^0$ is a ``macroscopic'' probability distribution, since it varies on the scale much larger than the particle interaction radius. Let $f_1^0$ be continuous with its partial derivatives over each argument.

Let us define the following problem for the Liouville equation:

\begin{equation}\label{EqLiouvHierarchy}
\left\lbrace\begin{aligned}
&\frac{\partial f_\mu}{\partial t}=\{H,f_\mu\},\\
&S^{(2)}_{-\Delta t}[f_{2\mu}(x_1,x_2,t-\Delta t)-f_{1\mu}(x_1,t-\Delta t)f_{1\mu}(x_2,t-\Delta t)]\to0\\
&\quad\text{(as $\mu\to0$, $N\to\infty$, $N\mu^2=const$, and $\Delta t\to0$, $\frac{\Delta t}\mu=\Delta\tau\to\infty$)},\\
&f_{1\mu}(x_1,0)\to f_1^0(x_1)\quad \text{(as $\mu\to0$)},\\
&\int_{\Omega_V^N} Hf_\mu\,dx_1\ldots dx_N<\infty,\\
&f_\mu(x_1,\ldots,x_N,t)=f_\mu(x_{i(1)},\ldots,x_{i(N)},t),\quad (i(1),\dots,i(N))=P(1,\ldots,N),
\end{aligned}\right.
\end{equation}
Here, $S^{(2)}_t$ is the two-particle Hamiltonian flow, i.e., $S^{(2)}_t\varphi(x_1,x_2)=\varphi(x_{1t},x_{2t})$, where $(x_{1t},x_{2t})$ is the phase point $(x_1,x_2)$ moved along the flow defined by the Hamiltonian $H^0_2$ (\ref{EqH02}) on $t$, $P(1,\ldots,N)$ is a permutation of the numbers $1,\ldots,N$. The second condition in (\ref{EqLiouvHierarchy}) is understood as a weak limit \cite{Kozlov} over the variable $p_2$, i.e.,
$$\lim\int_{\mathbb R^3}S^{(2)}_{-\Delta t}[f_{2\mu}(x_1,x_2,t-\Delta t)-f_{1\mu}(x_1,t-\Delta t)f_{1\mu}(x_2,t-\Delta t)]\varphi(p_2)dp_2=0$$
for any function $\varphi$ such that the integral exists. $f_\mu$ depends on $\mu$, since $\mu$ is a parameter in the Hamiltonian.

The limit $\mu\to0$, $N\to\infty$, $N\mu^2=const$ is called the Grad or Boltzmann--Grad limit \cite{Grad} and was used by Lanford.

\begin{theorem}
Let the function $f_\mu(x_1,\ldots,x_N,t)$ satisfy the problem (\ref{EqLiouvHierarchy}) and $f_{1\mu}(x_1,t)$ tend to some function $f_1(x_1,t)$ in the Grad limit in every point $x_1\in\Omega_V$. Let $\Phi(r)$ be monotonically decreasing function
and $\lim\limits_{r\to\infty}r^\gamma\Phi(r)=C\neq0,$ $\gamma>2$. Then the function $f_1(x_1,t)$
satisfies the Boltzmann equation:
\begin{equation*}
\frac{\partial f_1}{\partial t}=-\frac{p}m\frac{\partial f_1}{\partial q}+\frac{\partial U}{\partial q}\frac{\partial f_1}{\partial p}+\St f_1,\end{equation*}
\begin{equation*}
\St f=n\mu^2\int_{\Rh^2\times\Rh^3}\frac{|p-p_1|}m
[f_1(q,p',t)f_1(q,p'_1,t)-f_1(q,p,t)f_1(q,p_1,t)]d\sigma dp_1,\end{equation*}
$f(x_1,0)=f^0_1(x_1)$. Here $n=\frac NV$.
\end{theorem}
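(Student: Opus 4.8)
The plan is to start from the first equation of the BBGKY hierarchy. Integrating the Liouville equation \eqref{EqLiouv} over $x_2,\dots,x_N$, using the permutation symmetry (the last line of \eqref{EqLiouvHierarchy}) and the finiteness of the mean energy $\int_{\Omega_V^N}Hf_\mu$ to discard the boundary terms at $|p|\to\infty$ and at $\partial G$, one gets
\begin{equation*}
\frac{\partial f_{1\mu}}{\partial t}=-\frac{p_1}m\frac{\partial f_{1\mu}}{\partial q_1}+\frac{\partial U}{\partial q_1}\frac{\partial f_{1\mu}}{\partial p_1}+\frac{N-1}{V}\int_{\Omega_V}\frac{\partial\Phi(|q_1-q_2|/\mu)}{\partial q_1}\cdot\frac{\partial f_{2\mu}}{\partial p_1}\,dx_2 .
\end{equation*}
The first two terms are already the streaming and external-force terms of the Boltzmann equation, so the whole problem reduces to showing that the collision term $C_\mu$ (the last integral) tends to $\St f_1$. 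I would first rescale $q_2=q_1+\mu a$; since $\partial_{q_1}\Phi(|q_1-q_2|/\mu)=\mu^{-1}\Phi'(|a|)(-a/|a|)$ and $dq_2=\mu^3\,da$, this turns $C_\mu$ into $\tfrac{N-1}{V}\mu^2\int_{\Rh^3}\int_{\Rh^3}\bigl(-\Phi'(|a|)\tfrac{a}{|a|}\bigr)\cdot\frac{\partial f_{2\mu}}{\partial p_1}(q_1,p_1,q_1+\mu a,p_2,t)\,da\,dp_2$, whose prefactor $\tfrac{N-1}{V}\mu^2\to n\mu^2=N\mu^2/V$ is fixed in the Grad limit.

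The central step is to replace $f_{2\mu}(x_1,x_2,t)$ inside $C_\mu$ --- which is strongly correlated precisely because $q_1$ and $q_2$ lie within interaction range --- by a product of macroscopic one-particle functions. This cannot be done at time $t$; instead one transports back to $t-\Delta t$. Over $[t-\Delta t,t]$ the pair $(x_1,x_2)$ evolves, up to an error $O(\Delta t)$ (the expected number of intrusions of a third particle into the interaction tube of the pair is $O(\Delta t)$ since $N\mu^2$ is fixed), by the isolated two-particle flow $S^{(2)}$, so $f_{2\mu}(x_1,x_2,t)\approx\bigl(S^{(2)}_{-\Delta t}f_{2\mu}(\cdot,t-\Delta t)\bigr)(x_1,x_2)$. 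The second line of \eqref{EqLiouvHierarchy} --- the molecular-chaos hypothesis (Stosszahlansatz) of functional mechanics --- then gives, weakly in $p_2$, $S^{(2)}_{-\Delta t}f_{2\mu}(\cdot,t-\Delta t)\approx S^{(2)}_{-\Delta t}\bigl[f_{1\mu}(\cdot,t-\Delta t)\,f_{1\mu}(\cdot,t-\Delta t)\bigr]$. Because $\Delta\tau=\Delta t/\mu\to\infty$, the backward two-particle flow carries $(x_1,x_2)$ out of the interaction region, where only the factorization of $f_{1\mu}\,f_{1\mu}$ at widely separated points survives: on the incoming part of the $a$-integration the pair keeps the momenta $(p_1,p_2)$, on the outgoing part it recovers the pre-collisional momenta $(p_1',p_2')$ determined by $(p_1,p_2)$ and the impact geometry (using reversibility of $S^{(2)}$), and since $\Delta t\to0$, $\mu\to0$ the spatial arguments collapse to $q_1$. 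Carrying out the change of variables in which the components of $a$ transverse to $p_1-p_2$ become the impact parameter $(r,\varphi)$ produces the flux factor $|p_1-p_2|/m$ and $d\sigma=r\,dr\,d\varphi$, while the longitudinal integration, combined with the backward-flow structure, produces the difference of the gain and loss terms.

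Collecting everything and letting $\mu\to0$, $N\to\infty$ with $N\mu^2$ fixed, $\Delta t\to0$ with $\Delta t/\mu\to\infty$ in the coupled way prescribed in \eqref{EqLiouvHierarchy} (using the assumed pointwise convergence $f_{1\mu}\to f_1$ and enough continuity of $f_1$ to pass the limit under the integral), the collision term becomes
\begin{equation*}
\St f_1=n\mu^2\int_{\Rh^2\times\Rh^3}\frac{|p-p_1|}m\bigl[f_1(q,p',t)f_1(q,p_1',t)-f_1(q,p,t)f_1(q,p_1,t)\bigr]\,d\sigma\,dp_1,
\end{equation*}
with $p_2$ relabelled $p_1$. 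The monotonicity of $\Phi$ is used to guarantee that the classical two-particle scattering map $(p,p_1,r,\varphi)\mapsto(p',p_1')$ is single valued and smooth, and the tail condition $\lim_{r\to\infty}r^\gamma\Phi(r)=C\neq0$ with $\gamma>2$ is used so that the deflection at impact parameter $r$ decays like $r^{-\gamma}$; hence $f_1(q,p')f_1(q,p_1')-f_1(q,p,t)f_1(q,p_1,t)=O(r^{-\gamma})$ and $\int^\infty(\cdot)\,r\,dr$ converges --- this plays the role of Grad's angular cutoff. The initial condition $f_1(x_1,0)=f_1^0(x_1)$ is the third line of \eqref{EqLiouvHierarchy}.

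The hard part, as in every Boltzmann--Grad derivation \cite{Lanford}, is to make the replacement of $f_{2\mu}(\cdot,t)$ by the back-transported product fully rigorous: one must simultaneously control the isolated-pair approximation over $[t-\Delta t,t]$ (absence of recollisions in the Grad limit), the fact that the molecular-chaos hypothesis holds only in the weak sense in $p_2$, and the four coupled limits, showing that every error term vanishes in the prescribed order. The unbounded short-range singularity of $\Phi$ and its infinite-range tail (tamed through $\gamma>2$) are the technical difficulties absent in the hard-sphere case; a minor additional point is the justification of discarding the $p$-space boundary terms in the BBGKY reduction, for which the finite-energy condition $\int Hf_\mu<\infty$ is invoked.
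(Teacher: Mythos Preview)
Your overall architecture matches the paper's: derive the first BBGKY equation (the paper's Lemma~1), propagate $f_{2\mu}$ backward by the isolated two-particle flow over $[t-\Delta t,t]$ and invoke the factorization condition in \eqref{EqLiouvHierarchy} (Lemma~2), collapse the spatial arguments to $q_1$ (Lemma~3), and finally rewrite the force integral as the Boltzmann collision operator (Lemma~4). Your rescaling $q_2=q_1+\mu a$ is a cosmetic variant of the paper's restriction to a ball $B_r(q_1)$ with $r\to0$, $r/\mu\to\infty$; both isolate the factor $\mu^2$.

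The genuine gap is in your last step. You write that ``the longitudinal integration, combined with the backward-flow structure, produces the difference of the gain and loss terms,'' but you never say \emph{how}. After Lemma~3 the integrand still carries the force and a $p_1$-derivative,
\[
\frac{\partial\Phi(|q_2-q_1|/\mu)}{\partial q_1}\cdot\frac{\partial}{\partial p_1}\bigl[g(p_{10})g(p_{20})\bigr],
\]
and splitting the $a$-integration into ``incoming'' and ``outgoing'' halves does not by itself turn this into a total $z$-derivative that one can evaluate at $z=\pm\infty$. The paper supplies the missing mechanism: since $p_{10},p_{20}$ are asymptotic momenta (invariants of the $H^0_2$-flow up to the vanishing tail $\Phi(|q_{20}-q_{10}|/\mu)\to0$), one has $\{H^0_2,\,g(p_{10})g(p_{20})\}\approx0$. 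Expanding this Poisson bracket in the variables $(q_1,p_1,q_2,p_2)$ gives an identity that replaces the troublesome term $\frac{\partial\Phi}{\partial q_1}\cdot\frac{\partial}{\partial p_1}[\cdots]$ by $\frac{p_2-p_1}{m}\cdot\frac{\partial}{\partial q}[\cdots]$ (the $\partial_{p_2}$ piece integrates to zero). Only then does the cylindrical integration over $z$ along $p_2-p_1$ yield boundary values $g(p_1')g(p_2')-g(p_1)g(p_2)$ and the flux factor $|p_2-p_1|/m$. Without this energy-conservation/Poisson-bracket step your argument has no way to eliminate the $\partial_{p_1}$ and produce the gain--loss structure; this is the one idea you should add.

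A minor remark: your justification for dropping the three-particle term over $[t-\Delta t,t]$ (``expected number of intrusions is $O(\Delta t)$'') is more elaborate than what the paper actually uses --- there the integral term is discarded simply because the time interval has length $\Delta t\to0$ and the integrand is bounded. Your explanation of the roles of monotonicity of $\Phi$ and of the tail exponent $\gamma>2$ is reasonable and in fact more explicit than the paper's proof, which invokes these hypotheses only implicitly.
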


Hence, the $H$-theorem and the entropy production (see Proposition 2) are also valid.

Let us discuss the formulation of problem (\ref{EqLiouvHierarchy}). The third condition is the initial data for the single-particle function $f_1$. The fourth condition means the finiteness of the energy, the fifth condition means the symmetry of the density function with respect to permutations (i.e., the particles are indistinguishable).

The most interesting and crucial is the second condition. The knowledge of the initial single-particle function $f^0_1(x_1)$ is not sufficient to get a unique solution for the single-particle function $f_1(x_1,t)$. Since the condition of the form $f(x_1,\ldots,x_N,0)=f^0(x_1,\ldots,x_N)$ has not a direct physical meaning, we must have some additional condition.

As we said above, there are two scales of space-time in this consideration: the microscale, related to the interaction of the particles, and the macroscale, related to the kinetic phenomena. The two-particle function $f_2$ relates to the microscopic scale, because it incorporates the information about the pairwise correlations of the particles on the distances of order of $r_0$. The single-particle function $f_1$ relates to the macroscopic scale, since it does not incorporate the information about the particles' correlations and the kinetic theory is expressed in terms of this function. The second condition in (\ref{EqLiouvHierarchy}) means that the initial value for the microscopic function $f_2$ are assigned by the macroscopic function $f_1$. So, instead of the specification of the initial microscopic function $f^0$, we specify only the initial macroscopic function $f^0_1$ and impose a condition on the microscopic function: in a certain sense it is subordinated to the macroscopic one (in sense that its initial values are assigned from the macroscale).

Let $(q,t)$ be macroscopic space-time variables, $\Delta q\sim l$, $\Delta t\sim l/\overline u$, where $\overline u$ is the mean velocity of the particles, and let $(\xi,\tau)$ be microscopic space-time variables, $\Delta \xi\sim r_0$, $\Delta \tau\sim r_0/\overline u$. These variable are related to each other by the scale transformation
\begin{equation}\label{EqRescal}
\xi=\frac q\mu,\quad\tau=\frac t\mu,\quad\mu=\frac{r_0}l\to0.
\end{equation}
With such a transformation a macroscopically infinitesimal region may be infinitely large from the microscopic point of view. Exactly this situation we can see in the second condition in (\ref{EqLiouvHierarchy}): $\Delta t\to0$, but $\Delta\tau=\frac{\Delta t}\mu\to\infty$. Of course, these two scales meet each other in the collision integral of the Boltzmann equation: a collision is considered as a point and momentary act on the macroscale, but it takes place on the infinite space during the infinite time on the microscale.

The great disparity between microscopic and macroscopic scales as one of the origins of the irreversible macroscopic behaviour was pointed out in \cite{KozTres,Lebowitz}. The used rescaling of the space-time (\ref{EqRescal}) is typical for the derivation of the Boltzmann equation from the Liouville equation \cite{Spohn,DobrSinaiSukhov}. In the case of lattice dynamics this rescaling was used in \cite{DudSpohn,Dud} for the derivation of kinetic and hydrodynamic-type equations. Our proposition is to introduce the subordination of different space-time structures expressed in the form of the second condition in (\ref{EqLiouvHierarchy}), which gives a new way of derivation of the Boltzmann equation.

One can say about a kind of hierarchy and subordination of the scales: the initial values for the processes on the microscale (interactions of the particles) are assigned from the processes on the macroscale (kinetic phenomena). Note that the idea of the hierarchy of times (namely, the microscopic, kinetic and hydrodynamic relaxation times) in a slightly different sense was first proposed by Bogolyubov \cite{Bogol}.

\begin{remark}
We can note that in our limiting process the overall mass of the gas $M=mN$ tends to infinity, since $m$ is constant and $N$ tends to infinity. However, we can also rescale the mass of a single particle in the way $m_\mu=\mu^2m$ and rescale the interaction potential in the way $\Phi_\mu(r)=\mu^2\Phi(r)$ as it was proposed by Grad \cite{Grad} (we substitute $m$ by $m_\mu$ and $\Phi(\frac{|q_i-q_j|}\mu)$ by $\Phi_\mu(\frac{|q_i-q_j|}\mu)$ in (\ref{EqH})). The Hamiltonian system with these additional rescalings can be easily reduced to the considered one. So, the gas is considered as an infinite number of negligible mass particles in the limit. This exactly corresponds to the macroscopic intuition.
\end{remark}

Finally, we would like to note that the functional approach can be useful in the method of molecular dynamics simulations. At present the molecular dynamics follows the Newtonian approach and simulates the movement of material points. On this way it is hard to obtain the properties of complex and, moreover, biological systems, which constitute the aim of the molecular dynamics. Furthermore, the problem of uncontrolled cumulative errors in numerical integration is known. The simulation of the motion in terms of the probability density function seems to be more appropriate for obtaining the properties of the complex and biological systems and more justified from the computational point of view. Also the present report suggests that the initial conditions for the probability density function should be chosen not in an arbitrary way, but rather in the form suggested by the system as a whole. In our case such conditions are given by (\ref{EqLiouvHierarchy}). Let us note that these conditions are statistical in essence, they are formulated for the probability distribution and cannot be reduced to the conditions on the initial positions and momenta of the individual particles.

\section{Proof of the theorem}

The proof of the theorem is divided into four lemmas. Expressions like $A\approx B$ means $A-B\to0$ (in the Grad limit $N\to\infty$, $\mu\to0$, $N\mu^2=const$). Also for simplicity we will skip the lower index $\mu$ of the functions $f_{1\mu}$, $f_{2\mu}$, etc. So,  everywhere in this section $f_s$ means $f_{s\mu}$, $s=1,2,\ldots$.

In fact, the presented theorem and its proof is a rigorous mathematical formulation of the variant of the Bogolyubov's derivation presented in \cite{Landau10}. The distinguishing of two scales of space-time and the Grad limit allow to do this.

\begin{lemma}\label{LemfsDyn}
The functions $f_s(x_1,t)$, $s=1,2,\ldots,N-1$, satisfy the equations
\begin{equation}\label{EqfsDyn}
\frac{\partial f_s}{\partial t}=\{H_s,f_s\}+\frac{N-s}V\int_{\Omega_V}
\{\sum_{i=1}^s\Phi(\frac{|q_i-q_{s+1}|}\mu),f_{s+1}\}\,dx_{s+1}.
\end{equation}
Here
$$
H_s(q,p)=\sum_{i=1}^s\frac{p_i^2}{2m}+
\sum_{\begin{smallmatrix}i,j=1\\i>j\end{smallmatrix}}^s\Phi(\frac{|q_i-q_j|}\mu)+
\sum_{i=1}^sU(q_i)
$$
is the $s$-particle Hamiltonian.

In particular, the function $f_1$ satisfies the equation
$$
\frac{\partial f_1}{\partial t}=-\frac{p_1}m\frac{\partial f_1}{\partial q_1}+
\frac{\partial U(q_1)}{\partial q_1}\frac{\partial f_1}{\partial p_1}+
\frac{N-1}V\int_{\Omega_V}\frac{\partial\Phi(\frac{|q_2-q_1|}\mu)}{\partial q_1}\frac{\partial f_2}{\partial p_1}(x_1,x_2,t)\,dx_2.
$$
\end{lemma}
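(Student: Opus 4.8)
The statement to be proved is Lemma~\ref{LemfsDyn}: the BBGKY hierarchy for the marginals $f_s$ derived from the Liouville equation~(\ref{EqLiouv}) with the particular Hamiltonian~(\ref{EqH}). This is a classical computation, so the task is mainly one of bookkeeping rather than of overcoming a genuine obstacle. The plan is to start from the Liouville equation for the full $N$-particle density $f_\mu$, integrate both sides over the $N-s$ "extra" variables $x_{s+1},\dots,x_N$, multiply by $V^{N-s}$ to match the normalization of $f_s$, and use the symmetry of $f_\mu$ under permutations (the fifth condition in~(\ref{EqLiouvHierarchy})) to collapse the $N-s$ identical contributions from the integrated-out particles into a single term with the combinatorial prefactor $(N-s)/V$.

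First I would write the Poisson bracket $\{H,f_\mu\}$ explicitly as $\sum_{k=1}^N(\frac{\partial H}{\partial q_k}\frac{\partial f_\mu}{\partial p_k}-\frac{\partial H}{\partial p_k}\frac{\partial f_\mu}{\partial q_k})$ and split the sum over $k$ into the "retained" indices $k\le s$ and the "integrated" indices $k>s$, and correspondingly split $H=H_s+(\text{terms involving }x_{s+1},\dots,x_N)$. For the retained indices, the kinetic term $p_k^2/2m$, the external potential $U(q_k)$, and the mutual interactions $\Phi(|q_i-q_j|/\mu)$ with $i,j\le s$ together produce exactly $\{H_s,f_s\}$ after integration (since these terms do not depend on the integration variables, the $q_k,p_k$-derivatives pass through the integral). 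The mutual interactions $\Phi(|q_i-q_j|/\mu)$ with both $i,j>s$ contribute a term $\int \frac{\partial}{\partial p_k}(\cdots)\,dx_{s+1}\cdots dx_N$ which is a boundary term in $p_k$ and vanishes because $f_\mu$ decays at infinity in momentum (guaranteed, e.g., by the finite-energy condition, the fourth line of~(\ref{EqLiouvHierarchy})); similarly terms of the form $-\frac{\partial H}{\partial p_k}\frac{\partial f_\mu}{\partial q_k}$ with $k>s$ integrate to a surface term in $q_k$ over $\Omega_V$ and vanish because $U(q)\to+\infty$ at $\partial G$ forces $f_\mu\to 0$ there. The remaining cross terms are the interactions $\Phi(|q_i-q_{k}|/\mu)$ with $i\le s<k$: for fixed $k>s$ this gives $\int \{\sum_{i=1}^s\Phi(|q_i-q_k|/\mu),f_\mu\}\,dx_{s+1}\cdots dx_N$, and by the permutation symmetry all $N-s$ choices of $k$ give the same contribution. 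Integrating out the remaining variables except $x_k$ (renamed $x_{s+1}$) turns $V^{N-s}\int f_\mu\,dx_{\ge s+1}$ restricted appropriately into $\frac{V^{N-s}}{V^{N-s-1}}f_{s+1}=V f_{s+1}$, giving the prefactor $\frac{N-s}{V}$ in front of $\int_{\Omega_V}\{\sum_{i=1}^s\Phi(|q_i-q_{s+1}|/\mu),f_{s+1}\}\,dx_{s+1}$, which is~(\ref{EqfsDyn}).

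The specialization to $s=1$ is then immediate: $H_1=\frac{p_1^2}{2m}+U(q_1)$, so $\{H_1,f_1\}=-\frac{p_1}{m}\frac{\partial f_1}{\partial q_1}+\frac{\partial U(q_1)}{\partial q_1}\frac{\partial f_1}{\partial p_1}$, and the collision term becomes $\frac{N-1}{V}\int_{\Omega_V}\{\Phi(|q_2-q_1|/\mu),f_2\}\,dx_2$; expanding the Poisson bracket, only the $q_1$-derivative of $\Phi$ survives under the $x_2$-integral for the reason just explained (the $q_2$-derivative piece is a boundary term in $q_2$, the $p_2$-derivative piece is absent since $\Phi$ is $p$-independent), leaving $\frac{N-1}{V}\int_{\Omega_V}\frac{\partial\Phi(|q_2-q_1|/\mu)}{\partial q_1}\frac{\partial f_2}{\partial p_1}\,dx_2$ as claimed. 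I do not expect any real obstacle here; the only point requiring minor care is justifying the vanishing of the boundary terms, which follows from the hypotheses already imposed on $U$ (confinement at $\partial G$) and on the finiteness of the energy integral (adequate decay in $p$), and the legitimacy of differentiating under the integral sign, which follows from the stated smoothness of $\Phi$ and $U$ together with these decay/confinement properties.
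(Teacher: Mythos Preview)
Your proposal is correct and follows exactly the approach the paper indicates: the paper's own proof is the single sentence ``The proof is straightforward: integration of the Liouville equation over the variables $x_{s+1},\ldots,x_N$,'' so your plan simply spells out in detail (splitting indices $k\le s$ versus $k>s$, using the permutation symmetry to collapse $N-s$ identical terms, and discarding boundary terms in $p_k$ and $q_k$) precisely what that one line means.
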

The proof is straightforward: integration of the Liouville equation over the variables $x_{s+1},\ldots,x_N$. This well-known set of equations is called the BBGKY hierarchy.

\begin{lemma}In the sense of weak limit the following limiting equality is satisfied:
$$f_2(x_1,x_2,t)\approx f_1(X_1,t)f_1(X_2,t),$$
where $X_i=(Q_i,P_i)=(q_{i0}+\frac{p_{i0}\Delta t}m,p_{i0})$, $i=1,2$, $(x_{10},x_{20})=S^{(2)}_{-\Delta t}(x_1,x_2)$, $\Delta t\to0$, $\frac{\Delta t}\mu\to\infty$. Here $x_{i0}=(q_{i0},p_{i0})$, $i=1,2$.
\end{lemma}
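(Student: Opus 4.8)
\emph{Proof strategy.} The plan is to propagate the subordination condition — the second line of~(\ref{EqLiouvHierarchy}) — forward from time $t-\Delta t$ to time $t$ along the microscopic (two-particle) flow $S^{(2)}$, exploiting that on the macroscopic time interval $[t-\Delta t,t]$ the two genuine obstructions — the external potential and the coupling to $f_3$ in the BBGKY hierarchy — are negligible, whereas on the microscopic time scale $\Delta\tau=\Delta t/\mu\to\infty$ a binary collision of particles $1$ and $2$ is completely resolved by $S^{(2)}$. Everything is to be read in the weak sense in $p_2$, as in~(\ref{EqLiouvHierarchy}).

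First I would take the $s=2$ equation of Lemma~\ref{LemfsDyn} and split its right-hand side,
$$\frac{\partial f_2}{\partial t}=\{H^0_2,f_2\}+\{U(q_1)+U(q_2),f_2\}+R_2,\qquad R_2=\frac{N-2}V\int_{\Omega_V}\{\Phi(\tfrac{|q_1-q_3|}\mu)+\Phi(\tfrac{|q_2-q_3|}\mu),f_3\}\,dx_3,$$
and integrate it along the characteristics of $H^0_2$ over $[t-\Delta t,t]$. By Duhamel's formula,
$$f_2(x_1,x_2,t)=f_2(x_{10},x_{20},t-\Delta t)+\int_{t-\Delta t}^{t}S^{(2)}_{s-t}\big[\{U(q_1)+U(q_2),f_2\}+R_2\big](x_1,x_2,s)\,ds,$$
with $(x_{10},x_{20})=S^{(2)}_{-\Delta t}(x_1,x_2)$. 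The core claim is that the integral term tends to $0$ (weakly in $p_2$) in the combined limit. For the external-potential part this follows at once: $\nabla U$ is bounded on compact subsets of $G$, the finite-energy condition $\int Hf_\mu<\infty$ keeps the mass away from $\partial G$, and the interval has length $\Delta t\to0$. For $R_2$ one first drops the $\partial_{p_3}$-terms by integrating by parts in $p_3$, then changes variables $q_3=q_1+\mu\xi$ (resp. $q_3=q_2+\mu\xi$); what remains is $\tfrac{N-2}V\mu^2$ times an integral that is bounded uniformly in the limit by the energy bound, and since $\tfrac NV\mu^2=n\mu^2$ is kept constant in the Grad limit, $R_2$ is bounded in the relevant weak sense. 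Hence its time integral over $[t-\Delta t,t]$ is $O(\Delta t)\to0$, and $f_2(x_1,x_2,t)\approx f_2(x_{10},x_{20},t-\Delta t)$.

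Next I would invoke the second condition of~(\ref{EqLiouvHierarchy}): it says precisely that $S^{(2)}_{-\Delta t}[f_2(\cdot,\cdot,t-\Delta t)-f_1(\cdot,t-\Delta t)f_1(\cdot,t-\Delta t)]\to0$ weakly in $p_2$ in the same combined limit, i.e. $f_2(x_{10},x_{20},t-\Delta t)\approx f_1(x_{10},t-\Delta t)f_1(x_{20},t-\Delta t)$. Finally, integrating the $s=1$ equation of Lemma~\ref{LemfsDyn} along the free flow $s\mapsto(q_{i0}+\tfrac{p_{i0}(s-t+\Delta t)}m,p_{i0})$ over $[t-\Delta t,t]$, the external force $\partial_qU\,\partial_pf_1$ and the collision-type term are bounded uniformly in $\mu$ (the latter because it converges to the Boltzmann collision integral), so they contribute $O(\Delta t)\to0$, whence $f_1(x_{i0},t-\Delta t)\approx f_1(X_i,t)$ with $X_i=(q_{i0}+\tfrac{p_{i0}\Delta t}m,p_{i0})$, $i=1,2$. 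Chaining the three approximations — each valid weakly in $p_2$, and compatible because the intervening maps ($S^{(2)}_{-\Delta t}$, free streaming, multiplication by bounded factors) act continuously on the relevant class of test functions — gives $f_2(x_1,x_2,t)\approx f_1(X_1,t)f_1(X_2,t)$.

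I expect the main obstacle to be the bound on $R_2$: controlling the BBGKY coupling term uniformly in the Grad limit is exactly where Lanford-type estimates on iterated collision operators enter, and it also presupposes uniform-in-$\mu$ regularity of $f_{1\mu},f_{2\mu},f_{3\mu}$ inherited from the smoothness of $f_1^0$ and the finite-energy condition. Once this, together with the legitimacy of interchanging the $\mu\to0$ and $\Delta t\to0$ limits, is granted, the remainder is the bookkeeping above.
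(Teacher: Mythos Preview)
Your proposal is correct and follows essentially the same route as the paper: Duhamel along the $H^0_2$-flow for $f_2$ to reduce to $f_2(x_{10},x_{20},t-\Delta t)$, then the subordination condition in~(\ref{EqLiouvHierarchy}) to factorize, then Duhamel along the free flow for each $f_1$ to pass from $t-\Delta t$ to $t$. If anything, you are more careful than the paper, which simply writes ``Since $\Delta t\to0$, we can neglect the integral term'' without the $\mu$-scaling argument for $R_2$ or the explicit remarks on chaining weak limits.
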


\begin{proof}
Firstly, we want to proof the equality
\begin{equation}\label{Eqf2Deltat}
f_2(x_1,x_2,t)\approx f_1(x_{10},t-\Delta t)f_1(x_{20},t-\Delta t),
\end{equation}
Indeed, according to lemma \ref{LemfsDyn}, the function $f_2$ satisfies the equation
$$\frac{\partial f_2}{\partial t}=\{H^0_2,f_2\}+\sum_{i=1}^2\{U(q_i),f_2\}+\frac{N-2}V\int_{\Omega_V}
\{\sum_{i=1}^2\Phi(\frac{|q_i-q_3|}\mu),f_3\}\,dx_3.$$
Hence,
\begin{multline*}
f_2(x_1,x_2,t)=S^{(2)}_{-\Delta t}f_2(x_1,x_2,t-\Delta t)+\\+
\int_{t-\Delta t}^t
\Bigl[
\sum_{i=1}^2\{U(q_i),f_2(x_1,x_2,\tau)\}+\frac{N-2}V\int_{\Omega_V}
\{\sum_{i=1}^2\Phi(\frac{|q_i-q_3|}\mu),f_3(x_1,x_2,x_3,\tau)\}\,dx_3\Bigr]d\tau.
\end{multline*}
Since $\Delta t\to0$, we can neglect the integral term:
$$f_2(x_1,x_2,t)\approx S^{(2)}_{-\Delta t}f_2(x_1,x_2,t-\Delta t)\equiv f_2(x_{10},x_{20},t-\Delta t).$$
By the second condition in (\ref{EqLiouvHierarchy}),
$$f_2(x_{10},x_{20},t-\Delta t)\approx f_1(x_{10},t-\Delta t)f_1(x_{20},t-\Delta t).$$
Equality (\ref{Eqf2Deltat}) has been proved.

Now we apply these arguments once again:
\begin{multline*}
f_1(q_{i0},p_{i0},t-\Delta t)=f_1(q_{i0}+\frac{p_{i0}\Delta t}m,p_{i0},t)-\\-
\int_{t-\Delta t}^t
\Bigl[\{U(q_1),f_1(x_{i0},\tau)\}+\frac{N-2}V\int_{\Omega_V}
\{\Phi(\frac{|q_{i0}-q_3|}\mu),f_2(x_{i0},x_3,\tau)\}\,dx_3\Bigr]d\tau.
\end{multline*}
Since $\Delta t\to0$, we can neglect the integral term:
$$f_1(q_{i0},p_{i0},t-\Delta t)\approx f_1(q_{i0}+\frac{p_{i0}\Delta t}m,p_{i0},t),$$
so,
$$f_2(x_1,x_2,t)\approx f_2(x_{10},x_{20},t-\Delta t)\approx f_1(X_1,t)f_1(X_2,t).$$
\end{proof}

\begin{corollary}
$$
\int_{\Omega_V}\frac{\partial\Phi(\frac{|q_2-q_1|}\mu)}{\partial q_1}\frac{\partial f_2}{\partial p_1}(x_1,x_2,t)dx_2\approx
\int_{\Omega_V}\frac{\partial\Phi(\frac{|q_2-q_1|}\mu)}{\partial q_1}\frac{\partial}{\partial p_1}[f_1(X_1,t)f_1(X_2,t)]dx_2,
$$
and
\begin{equation}\label{EqBogol}
\frac{\partial f_1}{\partial t}\approx-\frac{p_1}m\frac{\partial f_1}{\partial q_1}+
\frac{\partial U(q_1)}{\partial q_1}\frac{\partial f_1}{\partial p_1}+
n\int_{\Omega_V}\frac{\partial\Phi(\frac{|q_2-q_1|}\mu)}{\partial q_1}\frac{\partial}{\partial p_1}[f_1(X_1,t)f_1(X_2,t)]\,dx_2
\end{equation}
($n=\frac NV$, we have replaced the factor $N-1$ by $N$, because the integral has an order of $\mu^2$ and tends to zero).
\end{corollary}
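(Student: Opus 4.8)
The plan is to read off (\ref{EqBogol}) from the equation for $f_1$ in Lemma~\ref{LemfsDyn} by inserting the conclusion of Lemma~2 into the interaction term and then cleaning up the prefactor. I would start from the last equation displayed in Lemma~\ref{LemfsDyn} and concentrate on its interaction integral $I(x_1,t)=\frac{N-1}V\int_{\Omega_V} g(q_1,q_2)\,\frac{\partial f_2}{\partial p_1}(x_1,x_2,t)\,dx_2$, where $g(q_1,q_2)=\frac{\partial\Phi(|q_2-q_1|/\mu)}{\partial q_1}$. Since $g$ does not depend on $p_1$ and the integration variable $x_2=(q_2,p_2)$ does not involve $p_1$, the $p_1$-derivative comes out of the integral, $I=\frac{N-1}V\,\frac{\partial}{\partial p_1}\int_{\Omega_V} g\,f_2\,dx_2$ (this integral is well defined thanks to the finite-energy condition in (\ref{EqLiouvHierarchy})). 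So the whole task reduces to replacing $f_2$ by $f_1(X_1,t)f_1(X_2,t)$ under $\int_{\Omega_V} g\,(\cdot)\,dx_2$ and to controlling the prefactor.

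For the replacement I would use Lemma~2: taking the test function $\varphi\equiv1$ in its weak-$p_2$ statement gives $\int_{\Rh^3} f_2\,dp_2\approx\int_{\Rh^3} f_1(X_1,t)f_1(X_2,t)\,dp_2$ for fixed $(x_1,q_2)$. To multiply by $g$ and integrate over $q_2$ I would rescale $q_2=q_1+\mu y$, so that $dq_2=\mu^3\,dy$ and $g=-\mu^{-1}\Phi'(|y|)\,y/|y|$, whence the $q_2$-integral acquires the form $-\mu^2\int_{\Rh^3}\Phi'(|y|)\,\frac{y}{|y|}\,[\,\cdot\,]\,dy$. The hypotheses that $\Phi$ is monotone decreasing and $r^\gamma\Phi(r)\to C\neq0$ with $\gamma>2$ are exactly what make this rescaled kernel integrable (its order-$|y|^{-\gamma-1}$ tail is integrable against the $3$-dimensional volume element precisely when $\gamma>2$; at short distance the finite-energy condition keeps the relevant integrals convergent). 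Dominated convergence then shows that $\int_{\Omega_V} g\,f_2\,dx_2$ and $\int_{\Omega_V} g\,f_1(X_1)f_1(X_2)\,dx_2$ are each of order $\mu^2$ and differ by $\mu^2 o(1)$; putting $\partial/\partial p_1$ back inside the $x_2$-integral, this is the first displayed equality of the Corollary.

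The order-$\mu^2$ bound also legitimises replacing the factor $N-1$ by $N$: indeed $\frac{N-1}V\,O(\mu^2)-\frac NV\,O(\mu^2)=-\frac1V\,O(\mu^2)\to0$, and $\frac NV\cdot\mu^2 o(1)=\frac{N\mu^2}V\,o(1)\to0$ since $N\mu^2$ is held constant in the Grad limit. Hence $I\approx n\,\frac{\partial}{\partial p_1}\int_{\Omega_V} g\,f_1(X_1,t)f_1(X_2,t)\,dx_2=n\int_{\Omega_V} g\,\frac{\partial}{\partial p_1}\bigl[f_1(X_1,t)f_1(X_2,t)\bigr]\,dx_2$ with $n=N/V$; substituting back into the $f_1$-equation of Lemma~\ref{LemfsDyn} yields (\ref{EqBogol}).

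The step I expect to be the main obstacle is commuting the Grad limit with the $p_1$-derivative, that is, upgrading ``$\int_{\Omega_V} g\,f_2\,dx_2\approx\int_{\Omega_V} g\,f_1(X_1)f_1(X_2)\,dx_2$'' to the same relation with $\partial/\partial p_1$ in front. The bare hypothesis that $f_{1\mu}\to f_1$ pointwise does not suffice; one needs convergence that is, say, locally uniform in $p_1$ together with uniform control of the $p_1$-derivatives of both sides — the kind of regularity that the smoothness of $f_1^0$ and the BBGKY dynamics are meant to propagate, but which has to be assumed (or proved separately) for a fully rigorous argument. A similarly mild uniformity in $q_2$ is needed to pass the weak $p_2$-limit through the $q_2$-integration against $g$. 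Granting these uniformity properties, the Corollary follows, and it is exactly in the form (\ref{EqBogol}) that the two remaining lemmas will recast it as the collision integral.
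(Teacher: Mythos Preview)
Your proposal is correct and follows the same route as the paper: the Corollary is not given a separate proof there but is treated as the immediate combination of the $f_1$-equation from Lemma~\ref{LemfsDyn} with the weak approximation $f_2\approx f_1(X_1,t)f_1(X_2,t)$ of Lemma~2, together with the parenthetical remark that the interaction integral is $O(\mu^2)$ so that $N-1$ may be replaced by $N$. Your rescaling argument for the $O(\mu^2)$ bound and your flagging of the limit--derivative commutation issue spell out details that the paper leaves implicit, but the overall logic is identical.
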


Equation (\ref{EqBogol}) is the so-called Bogolyubov equation. Bogolyubov claims that this equation is more precise than the Boltzmann one and starts with this equation (not from the Boltzmann equation) in order to derive the hydrodynamic equations with viscosity and heat conduction.

\begin{lemma}
\begin{equation*}
\frac{\partial f_1}{\partial t}\approx-\frac{p_1}m\frac{\partial f_1}{\partial q_1}+
\frac{\partial U(q_1)}{\partial q_1}\frac{\partial f_1}{\partial p_1}+
n\int_{\Omega}\frac{\partial\Phi(\frac{|q_2-q_1|}\mu)}{\partial q_1}\frac{\partial}{\partial p_1}[f_1(q_1,p_{10},t)f_1(q_1,p_{20},t)]\,dx_2,
\end{equation*}
where $\Omega=\mathbb R^3\times\mathbb R^3$.
\end{lemma}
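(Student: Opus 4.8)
The plan is to start from the Bogolyubov equation (\ref{EqBogol}) — already established in the Corollary — and to simplify its last term in two independent steps: first, localise the $q_2$-integration to a macroscopically infinitesimal ball around $q_1$, which simultaneously lets one replace $\Omega_V=G\times\mathbb R^3$ by $\Omega=\mathbb R^3\times\mathbb R^3$; second, replace the propagated point $X_i=(Q_i,P_i)$ by $(q_1,p_{i0})$, using that $Q_i\to q_1$ in the Grad limit while $P_i=p_{i0}$ is left untouched. Throughout, as in the previous lemmas, the $p_2$-integration is understood weakly, against a test function for which the integral converges, and the momentum decay of $f_1^0$ (inherited by $f_1$) furnishes the dominating functions.

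For the first step I would substitute $y=(q_2-q_1)/\mu$ in $\int_{\Omega_V}\frac{\partial\Phi(|q_2-q_1|/\mu)}{\partial q_1}\,(\cdots)\,dx_2$, so that $dq_2=\mu^3dy$ and $\frac{\partial}{\partial q_1}\Phi(|q_2-q_1|/\mu)=-\mu^{-1}\Phi'(|y|)\,y/|y|$ (this is the origin of the factor $n\mu^2$ in the final collision integral). Since $\Phi$ is decreasing, $\Phi'\le0$, and since $\lim_{r\to\infty}r^{\gamma}\Phi(r)=C$ with $\gamma>2$, integration by parts gives $\int_{|y|\ge R}|\Phi'(|y|)|\,dy=4\pi\bigl(\Phi(R)R^2+2\int_R^{\infty}\Phi(r)r\,dr\bigr)=O(R^{2-\gamma})\to0$ as $R\to\infty$; this is exactly where the hypothesis $\gamma>2$ is used, while near the origin the weight is integrable because $\Phi$ is assumed to produce a finite scattering cross-section. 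Consequently, for a fixed interior $q_1$ with $\delta:=\mathrm{dist}(q_1,\partial G)>0$, the portion with $|q_2-q_1|\ge\rho$ is bounded by $\mu^2\int_{|y|\ge\rho/\mu}|\Phi'(|y|)|\,dy\to0$ for every fixed $\rho>0$: taking $\rho=\delta$ shows that replacing $\Omega_V$ by $\Omega$ is harmless, and taking $\rho\to0$ arbitrarily slowly shows that one may assume $|q_2-q_1|\le\rho\to0$.

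For the second step, recall $X_i=(q_{i0}+p_{i0}\Delta t/m,\,p_{i0})$ with $(x_{10},x_{20})=S^{(2)}_{-\Delta t}(x_1,x_2)$. Along the flow of $H^0_2$ the energy is conserved and $\Phi$ is bounded below, so $\sup_{s\in[t-\Delta t,t]}|p_i(s)|\le c(1+|p_1|+|p_2|)$ off the null set $\{q_1=q_2\}$, whence $|q_{i0}-q_i|\le c(1+|p_1|+|p_2|)\Delta t/m$ and $|Q_i-q_1|\le|q_{i0}-q_i|+|q_i-q_1|+|p_{i0}|\Delta t/m\le c(1+|p_1|+|p_2|)\Delta t/m+\rho\to0$; since $f_1\in C^1$, $f_1(X_i,t)\to f_1(q_1,p_{i0},t)$ pointwise. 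To push the limit through the $\partial/\partial p_1$ in (\ref{EqBogol}) I would expand $\partial_{p_1}f_1(Q_i,P_i)=\partial_q f_1(Q_i,P_i)\,\partial_{p_1}Q_i+\partial_p f_1(Q_i,P_i)\,\partial_{p_1}P_i$; here $P_i=p_{i0}$, and $\partial_{p_1}Q_i=\partial_{p_1}q_{i0}+(\Delta t/m)\partial_{p_1}p_{i0}$, where a rescaling to microscopic variables (positions scale by $\mu$, momenta do not, elapsed micro-time $\Delta\tau$) gives $\partial_{p_1}q_{i0}=\mu\cdot O(\Delta\tau)=O(\Delta t)$. With the scattering Jacobian $\partial_{p_1}p_{i0}$ and the derivatives of $f_1$ bounded (the former away from grazing/head-on collisions, a null set excised at negligible cost), $\partial_{p_1}Q_i\to0$, so $\partial_{p_1}f_1(X_i,t)\to\partial_p f_1(q_1,p_{i0},t)\,\partial_{p_1}p_{i0}=\partial_{p_1}[f_1(q_1,p_{i0},t)]$; the product rule for $\partial_{p_1}[f_1(X_1,t)f_1(X_2,t)]$ plus dominated convergence under the weak $dx_2$-integral (dominating function $\mu^2|\Phi'(|y|)|$ times a bound for the $f_1$-derivatives times the test function in $p_2$) then yields the asserted equation.

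The hard part is the uniformity needed in the second step: controlling the Jacobian $\partial(x_{10},x_{20})/\partial(x_1,x_2)$ of the two-body flow near singular (head-on or grazing) collisions, and the large-$|p_2|$ behaviour, so that the pointwise limits above may be integrated. I would handle this by excising the exceptional configurations — a null set, harmless since the relevant measures are absolutely continuous — by using energy conservation to bound the momenta along the flow, and by staying inside the weak-limit formulation of (\ref{EqLiouvHierarchy}), where only integrals against admissible test functions in $p_2$ need converge and the momentum decay of $f_1^0$ supplies the dominating function.
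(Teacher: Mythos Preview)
Your argument is correct and follows essentially the same route as the paper: localise the $q_2$-integration near $q_1$ via the decay of $\Phi$ (this is where the paper invokes the ``asymptotic properties of $\Phi$'' and you make the role of $\gamma>2$ explicit), and use $Q_i\to q_1$ together with continuity of the limiting $f_1$ to replace the spatial arguments; the only difference is that the paper first passes from $Q_i$ to $q_i$ and then from $q_2$ to $q_1$, whereas you combine these into a single step after localising. Your treatment of how $\partial/\partial p_1$ interacts with the approximation (via $\partial_{p_1}Q_i=O(\Delta t)$ and the scattering Jacobian) is more careful than the paper's proof, which simply replaces $f_1(Q_i,p_{i0},t)$ by $f_1(q_i,p_{i0},t)$ at the level of the integrand without discussing the derivative.
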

\begin{proof}
Since $\Delta t\to0$, we have $Q_i\approx q_i$, $i=1,2$. The function $f_1$ tends to a continuous function as $\mu\to0$, so, $f_1(Q_i,p_{i0},t)\approx f_1(q_i,p_{i0},t)$.

Then, replace the integration domain $\Omega_V$ by $\Omega_r(q_1)=B_r(q_1)\times\mathbb R^3\ni(q_2,p_2)$, where $B_r(q_1)\subset\mathbb R^3$ is a ball with the center in $q_1$ and the radius $r$, $r\to0$, $\frac r\mu\to\infty$. We can do such replacement because of the asymptotic properties of the potential $\Phi$:
\begin{multline*}
\int_{\Omega}\frac{\partial\Phi(\frac{|q_2-q_1|}\mu)}{\partial q_1}\frac{\partial}{\partial p_1}[f_1(q_1,p_{10},t)f_1(q_2,p_{20},t)]\,dx_2\approx\\\approx
\int_{\Omega_r(q_1)}\frac{\partial\Phi(\frac{|q_2-q_1|}\mu)}{\partial q_1}\frac{\partial}{\partial p_1}[f_1(q_1,p_{10},t)f_1(q_2,p_{20},t)]\,dx_2.
\end{multline*}
Now $q_2-q_1\to0$ in the integration domain, since $r\to0$. So, we can replace $f_1(q_2,p_{20},t)$ by $f_1(q_1,p_{20},t)$. Finally, due to the properties of $\Phi$ we can once again replace the integration domain $\Omega_r(q_1)$ by $\Omega$.
\end{proof}

So, we have replaced the spatial argument of the functions $f_1$ by the constant $q_1$. Note that the integrand still depends on $q_2$, because $p_{i0}=p_{i0}(q_1,p_1,q_2,p_2)$.  Note also that the mathematical condition that $f_1$ is a continuous function in the limit $\mu\to0$ means from the physical point of view that the single-particle distribution function varies significantly only on the kinetic scale (the variations on the scale of order of the interaction radius are negligible).

\begin{lemma}
$$n\int_{\Omega_V}\frac{\partial\Phi(\frac{|q_2-q_1|}\mu)}{\partial q_1}\frac{\partial}{\partial p_1}[f_1(q_1,p_{10},t)f_1(q_1,p_{20},t)]\,dx_2\approx\St f_1$$
\end{lemma}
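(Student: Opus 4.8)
The plan is to convert the integral over the microscopic collision region into the Boltzmann collision integral $\St f_1$ by (i) using the smallness of the interaction radius to isolate a single "active" particle near $q_1$, (ii) passing to the rescaled microscopic variable $\xi=(q_2-q_1)/\mu$, (iii) splitting the $\xi$-space into cylindrical coordinates adapted to the relative velocity $(p_1-p_2)/m$, and (iv) recognizing the $\xi$-integration of the force term $\partial_{q_1}\Phi(|q_2-q_1|/\mu)\,\partial_{p_1}[\,\cdot\,]$ as the difference of precollisional and postcollisional products, i.e. the scattering cross-section integral. First I would rewrite $q_2 = q_1 + \mu\xi$, so $dx_2 = \mu^3\,d\xi\,dp_2$, and $\partial\Phi(|q_2-q_1|/\mu)/\partial q_1 = -\mu^{-1}\,\Phi'(|\xi|)\,\xi/|\xi|$; combined with the prefactor $n=N/V$ and the constraint $N\mu^2=\mathrm{const}$, the net power of $\mu$ is $\mu^{3}\cdot\mu^{-1}\cdot N \sim \mu^2 N = \mathrm{const}$, which is exactly the $n\mu^2$ normalization appearing in the statement of the theorem. (In the lemma as written the factor $\mu^2$ has been absorbed; I would keep careful track of it and note that it matches $\St f$ in the Theorem.)

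Next I would introduce, for fixed $p_1,p_2$, cylindrical coordinates $(r,\varphi,h)$ on $\xi$-space with axis along the relative velocity: $r$ is the impact parameter, $\varphi$ the polar angle, and $h$ the coordinate along the axis. Because the external potential and the $f_1$ arguments have already been frozen at $q_1$ by Lemma 3, the only $\xi$-dependence left is through $p_{10}=p_{10}(q_1,p_1,q_1+\mu\xi,p_2)$ and $p_{20}$, i.e. through the two-particle backward flow $S^{(2)}_{-\Delta t}$. The key point is that, as $\Delta t/\mu=\Delta\tau\to\infty$, the backward-evolved momenta $p_{10},p_{20}$ converge to the asymptotic incoming momenta of a scattering trajectory with impact parameter $r$ — but one must be careful about orientation: for the half of $\xi$-space with the two particles approaching ($h<0$, say) the backward flow over a long time sends $(p_1,p_2)$ essentially to themselves (they have not yet collided going backward is "after" the collision), whereas for the receding half ($h>0$) the backward flow undoes the collision and produces the precollisional pair $(p_1',p_2')$. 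Integrating $\Phi'(|\xi|)$ along the $h$-direction, $\int \Phi'(|\xi(h)|)\,(\xi/|\xi|)\cdot(\text{something})\,dh$, telescopes (fundamental theorem of calculus along the trajectory) to the difference of the value of $\partial_{p_1}[f_1 f_1]$ evaluated with the outgoing momenta minus with the incoming momenta. After carrying the $\partial/\partial p_1$ through and using the standard identity that the gain term comes with primed momenta and the loss term with unprimed ones, together with the Jacobian factor $|p-p_1|/m$ coming from converting the $h$-integral (length along the axis) into a "time of flight" and the relation between the scattering solid angle element and $d\sigma=r\,dr\,d\varphi$, one obtains precisely
$$
n\mu^2\int_{\Rh^2\times\Rh^3}\frac{|p-p_1|}m\,[f_1(q_1,p')f_1(q_1,p_1')-f_1(q_1,p)f_1(q_1,p_1)]\,d\sigma\,dp_1 = \St f_1.
$$

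The main obstacle is the step that identifies the long-time limit of the two-particle backward flow with the scattering map and justifies the telescoping integral: one must show that for almost every impact parameter $r$ the trajectory under $H^0_2$ really does separate to infinity (in the rescaled $\tau$-variable) with well-defined asymptotic momenta, uniformly enough that the $\xi$-integral converges and the limits $\mu\to0$, $r\to0$ with $r/\mu\to\infty$, $\Delta\tau\to\infty$ can be interchanged. This is where the hypothesis $\lim_{r\to\infty}r^\gamma\Phi(r)=C\neq0$ with $\gamma>2$ enters: it guarantees that the potential tail is short-range enough for the scattering integral (the total cross-section weighted appropriately) to converge, so that extending the $\xi$-integration back to all of $\Rh^3$ introduces only a vanishing error, and that the deflection angle is an integrable function of the impact parameter. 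The monotonicity of $\Phi$ ensures the scattering map $(r,\varphi)\mapsto(p',p_1')$ is well-defined (no orbiting, each impact parameter gives a unique outgoing direction) and that the change of variables has nonvanishing Jacobian, so that $d\sigma$ can legitimately be written as $r\,dr\,d\varphi$. I would handle the convergence via dominated convergence, using the weak-limit formulation (integration against test functions $\varphi(p_2)$) already built into problem (\ref{EqLiouvHierarchy}) to avoid pointwise pathologies, and cite the standard treatment of binary scattering (e.g. \cite{Landau10}) for the classical-mechanics input.
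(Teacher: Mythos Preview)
Your overall plan—localize near $q_1$, pass to cylindrical coordinates adapted to $p_2-p_1$, read off the scattering limits at the two ends of the axis—matches the paper's, and your $\mu$-bookkeeping is fine. But the decisive step, the one you describe as ``telescoping by the fundamental theorem of calculus along the trajectory'', does not work as you state it, and the output you claim for it is wrong.

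Concretely, you assert that integrating the force term $\partial_{q_1}\Phi\cdot\partial_{p_1}[g(p_{10})g(p_{20})]$ along the axial coordinate $h$ yields the difference of $\partial_{p_1}[f_1f_1]$ at the two ends. That cannot be right: the Boltzmann collision integral contains $f_1f_1$ itself, not its $p_1$-gradient, so your expression would not reduce to $\St f_1$, and no amount of ``carrying the $\partial/\partial p_1$ through'' will remove it (there is no $p_1$-integration in $\St f_1$ to integrate by parts against). The underlying error is that the $h$-integration runs along a \emph{straight line in $\xi$-space at fixed $(p_1,p_2)$}, not along a Hamiltonian trajectory; on a genuine trajectory the momenta change during the collision, so $dh\neq\tfrac{|p_2-p_1|}{m}\,d\tau$ there, and there is no identity of the form $\partial_h[gg]=(\text{force})\cdot\partial_{p_1}[gg]$ that would license your use of the fundamental theorem. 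Your ``time of flight'' explanation of the factor $|p_2-p_1|/m$ is a symptom of the same conflation.

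The ingredient you are missing, and which the paper supplies, is a Poisson-bracket identity. Set $g(p,t)=f_1(q_1,p,t)$ and compute $\{H_2^0,\,g(p_{10})g(p_{20})\}$ in two ways. On one hand, energy conservation gives $H_2^0\approx\tfrac{p_{10}^2}{2m}+\tfrac{p_{20}^2}{2m}$ (the particles are far apart at $t-\Delta t$), and since the backward flow $S^{(2)}_{-\Delta t}$ is canonical the components of $p_{10},p_{20}$ Poisson-commute; hence the bracket vanishes. On the other hand, writing the bracket out in the canonical variables $(q_1,p_1,q_2,p_2)$ gives
$$\partial_{q_1}\Phi\cdot\partial_{p_1}[gg]+\partial_{q_2}\Phi\cdot\partial_{p_2}[gg]-\tfrac{p_1}{m}\cdot\partial_{q_1}[gg]-\tfrac{p_2}{m}\cdot\partial_{q_2}[gg]\approx0.$$
Solving for the first term, integrating over $x_2$, using that the $\partial_{p_2}$-term is a total $p_2$-divergence (hence vanishes), and using translation invariance ($gg$ depends on $q_1,q_2$ only through $q=q_2-q_1$), one obtains
$$\int_{\Omega}\partial_{q_1}\Phi\cdot\partial_{p_1}[gg]\,dx_2=\int_{\Omega}\frac{p_2-p_1}{m}\cdot\partial_q[gg]\,dq\,dp_2.$$
\emph{Now} the integrand is a genuine total $z$-derivative (with $z$ along $p_2-p_1$), and the $z$-integration telescopes to $\tfrac{|p_2-p_1|}{m}\,[gg]\big|_{z=-\infty}^{z=+\infty}$, yielding the gain-minus-loss structure with no stray momentum derivative. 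The factor $|p_2-p_1|/m$ arises from $(p_2-p_1)\cdot\partial_q=|p_2-p_1|\,\partial_z$, not from a time-of-flight argument.

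Your closing remarks on the role of monotonicity of $\Phi$ and the decay exponent $\gamma>2$ are reasonable background, though the paper's proof of this lemma does not invoke them explicitly.
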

\begin{proof}
Since $q_1$ is a constant in the terms $f_1(q_1,p_{i0},t)$, $i=1,2$, let us introduce the function $g(p,t)=f_1(q_1,p,t)$ and rewrite the integral in the form
$$
\int_\Omega\frac{\partial\Phi(\frac{|q_2-q_1|}\mu)}{\partial q_1}\frac{\partial}{\partial p_1}[f_1(q_1,p_{10},t)f_1(q_1,p_{20},t)]\,dx_2=
\int_\Omega\frac{\partial\Phi(\frac{|q_2-q_1|}\mu)}{\partial q_1}\frac{\partial}{\partial p_1}[g(p_{10},t)g(p_{20},t)]\,dx_2.
$$

The two-particle Hamiltonian without the  external potential can be represented in two ways:
$$H^0_2=\frac{p_1^2}{2m}+\frac{p_2^2}{2m}+\Phi(\frac{|q_2-q_1|}\mu)=
\frac{p_{10}^2}{2m}+\frac{p_{20}^2}{2m}+
\Phi(\frac{|q_{20}-q_{10}|}\mu)\approx\frac{p_{10}^2}{2m}+\frac{p_{20}^2}{2m}.$$
The last limiting equality is satisfied, since $\frac{\Delta t}\mu\to\infty$ and, hence, $\frac{|q_{20}-q_{10}|}\mu\to\infty$. Physically this is just the law of conservation of energy: at the moment $t$ two particles obey the kinetic energy and the potential energy of interaction, at the moment $t-\Delta t$ they are far from each other (relative to their interaction radius), so, they  have only the kinetic energy.

Now, from the one side,
$$\{H^0_2,g(p_{10},t)g(p_{20},t)\}\approx
\{\frac{p_{10}^2}{2m}+\frac{p_{20}^2}{2m},g(p_{10},t)g(p_{20},t)\}=0.$$
From the other side, $p_{10}$ and $p_{20}$ are functions of $q_1,p_1,q_2,p_2$, and
\begin{multline*}
\{H^0_2,g(p_{10},t)g(p_{20},t)\}=
\frac{\partial\Phi(\frac{|q_2-q_1|}\mu)}{\partial q_1}
\frac{\partial}{\partial p_1}[g(p_{10},t)g(p_{20},t)]+
\frac{\partial\Phi(\frac{|q_2-q_1|}\mu)}{\partial q_2}
\frac{\partial}{\partial p_2}[g(p_{10},t)g(p_{20},t)]-\\-\frac{p_1}m\frac{\partial}{\partial q_1}[g(p_{10},t)g(p_{20},t)]-
\frac{p_2}m\frac{\partial}{\partial q_2}[g(p_{10},t)g(p_{20},t)].
\end{multline*}
So,
\begin{multline*}
\frac{\partial\Phi(\frac{|q_2-q_1|}\mu)}{\partial q_1}
\frac{\partial}{\partial p_1}[g(p_{10},t)g(p_{20},t)]+
\frac{\partial\Phi(\frac{|q_2-q_1|}\mu)}{\partial q_2}
\frac{\partial}{\partial p_2}[g(p_{10},t)g(p_{20},t)]-\\-\frac{p_1}m\frac{\partial}{\partial q_1}[g(p_{10},t)g(p_{20},t)]-
\frac{p_2}m\frac{\partial}{\partial q_2}[g(p_{10},t)g(p_{20},t)]\approx0.
\end{multline*}

Express the first term from the others and substitute it to the  integral:
$$
\int_{\Omega}\frac{\partial\Phi(\frac{|q_2-q_1|}\mu)}{\partial q_1}\frac{\partial}{\partial p_1}[g(p_{10},t)g(p_{20},t)]dx_2=
\int_{\Omega}\frac{p_2-p_1}m\frac{\partial}{\partial q}[g(p_{10},t)g(p_{20},t)]dqdp_2,
$$
where $q=q_2-q_1$ (the term with the derivative $\frac{\partial}{\partial p_2}$ vanishes after the transformation of the corresponding integral into the integral over the surface in the momentum space, we use the integrability of the function $g(p,t)$ in $p$). Let us introduce the cylindric coordinates $q=(z,r,\varphi)$, where the axis $z$ is directed along the vector $p_2-p_1$. Then $(p_2-p_1)\frac{\partial}{\partial q}=|p_2-p_1|\frac{\partial}{\partial z}$. By integration over $z$ we get
\begin{multline*}
n\int_{\Omega}\frac{p_2-p_1}m\frac{\partial}{\partial q}[g(p_{10},t)g(p_{20},t)]dqdp_2=
n\int_D\left.\frac{|p_2-p_1|}m[g(p_{10},t)g(p_{20},t)]\right|^{z=+\infty}_{z=-\infty}rdrd\varphi dp_2=\\=
n\mu^2\int_D\frac{|p_2-p_1|}m[g(p'_1,t)g(p'_2,t)-
g(p_1,t)g(p_2,t)]d\sigma dp_2=\St f_1,
\end{multline*}
where
$$p_i=\lim_{z\to-\infty}p_{i0}(x_1,x_2),\quad p'_i=\lim_{z\to+\infty}p_{i0}(x_1,x_2),$$
$i=1,2$, are the momenta before and after the collision correspondingly, $d\sigma=\rho d\rho d\varphi$ is a differential cross-section of the collision, $\rho=\frac r\mu$, $D=\mathbb R^2\times\mathbb R^3$.

This concludes the proofs of the lemma and the theorem.
\end{proof}

\section{The problem of the existence of solution}

The main problem of the presented derivation is the following: whether exists a solution of problem (\ref{EqLiouvHierarchy}) such that $f_{1\mu}(x_1,t)$ tends to a continuous function $f_1(x_1,t)$ in the Grad limit for every moment of time $t$ (this is a condition of the theorem).

This condition is analogous to the Boltzmann's molecular chaos (or molecular disorder) hypothesis and to the Bogolyubov's condition of the absence of correlations between two particles before their collision. In fact, we ask whether exists a solution of the Liouville equation such that the most crucial second condition in (\ref{EqLiouvHierarchy}) is satisfied at every moment of time. Or, equivalently, whether exists such family of initial functions $f^0_{\mu}(x_1,\ldots,x_N)$, $\mu>0$, such that $f^0_{1\mu}$ tends to the predefined function $f^0_1$ as $\mu\to0$ and the second condition in (\ref{EqLiouvHierarchy}) is satisfied at every moment of time $t\geq0$. Intuitively it seems true, but the problem is to prove it rigorously. 

A rigorous proof of this statement could be done by the representation of a solution of the BBGKY hierarchy in the form of the Dyson expansion series. This was used by Lanford \cite{Lanford}. The Dyson expansion series supports some intuition: every term of the expansion is related to the number of collisions. Also in some sense the Dyson series is related to the path integral. But unfortunately, the Dyson series converges only on small times. For this reason the Lanford's derivation of the Boltzmann equation also applies only to small times (see the Introduction).

The fact that the Dyson series can be applied only to small times (less than the mean free time) because of secular terms was noticed also by Bogolyubov \cite{Bogol}. In order to overcome this problem he proposed a more sophisticated method based on some additional assumptions and methods of nonlinear dynamics. But there is no idea how these additional assumptions could be proved. Besides this, as we said in the Introduction, his method leads to the divergences in the high order corrections to the Boltzmann equation.

Note that the limit in the second condition in (\ref{EqLiouvHierarchy}) is formulated in the sense of weak limit. This limit is not satisfied in the pointwise sense for an arbitrary moment of time. But there are arguments that this limit can be satisfied in the weak sense, which is sufficient for us \cite{Bogol77,Shelest}.

Note also that we assume the pointwise convergence of the function $f_{1\mu}(x_1,t)$ to the function $f_1(x_1,t)$ as $\mu\to0$. However, the theorem and the proof are still valid if this convergence also is valid only in the week sense.

\section{Conclusions}
In this report we have proposed a derivation of the Boltzmann equation from the Liouville equation based on some ideas of functional mechanics and measurement theory. The main features of this derivation is the distinguishing of two scales of space-time (micro- and macroscopic or kinetic) and the subordination of the processes on the microscale to the processes on the macroscale.

According to the traditional paradigm of the mathematical physics, the dynamics is completely determined if we know the law of motion, i.e., a differential equation, and the initial values for it. However, the initial values themselves are understood as something external to the equations of mathematical physics (``As regards the present state of the world\ldots the laws of nature are entirely silent'' \cite{Wigner}). We propose another picture: the initial values for a given level of nature are assigned from the higher level.

So, instead of the reductionism, which claims the reducibility of all levels of the nature to the most microscopic level, we propose a hierarchical picture of the world: the lower levels of the nature are subordinated to the higher ones.

\section{Acknowledgments}

The author is grateful to Prof. I.\,V.~Volovich for helpful discussions and remarks. This work was partially supported by the Russian Foundation of Basic Research (project 11-01-00828-a), the grant of the President of the Russian Federation (project NSh-7675.2010.1), and the Division of Mathematics of the Russian Academy of Sciences.


\begin{thebibliography}{99} 

\bibitem{Bogol}N.\,N.~Bogolyubov, {Problems of Dynamic Theory in Statistical Physics} (Gostekhizdat, Moscow--Leningrad, 1946; North-Holland, Amsterdam, 1962; Interscience, New York, 1962).

\bibitem{Bogol77}N.\,N.~Bogolyubov, {\it Kinetic Equations and Green Functions in Statistical Mechanics} (Institute of Physics of the Azerbaijan SSR Academy of Sciences, Baku, 1977), Preprint N.57 (in Russian).

\bibitem{Lanford} O.\,E.~Lanford, ``Time evolution of large classical systems'', {\it Lect. Notes Phys.} {\bf 38} 1--111 (1975).

\bibitem{VolFuncMech}I.\,V.~Volovich, ``Time irreversibility problem and functional formulation of classical mechanics'', {\it Vestnik Samara State University} {\bf 8/1} (2008) 35--54; arXiv:0907.2445v1 [cond-mat.stat-mech]

\bibitem{VolRand}I.\,V.~Volovich, ``Randomness in classical and quantum mechanics'', {\it Found. Phys.} {\bf 41}(3) 516--528 (2011); arXiv:0910.5391v1 [quant-ph]

\bibitem{VolBogol}I.\,V.~Volovich, ``Bogoliubov equations and functional mechanics'' {\it Theor. Math. Phys.} {\bf 164}(3) 1128--1135 (2010).

\bibitem{Villani}L.~Desvillettes, C.~Villani, ``On the trend to global equilibrium for spatially inhomogenious kinetic systems: Boltzmann equation'' {\it Invent. Math.} {\bf 159}(2) 245--316 (2005).

\bibitem{VVZ}
V.\,S.\,Vladimirov, I.\,V.\,Volovich, E.\,I.\,Zelenov, \textit{$p$--Adic Analysis
and Mathematical Physics} (World Scientific, Singapore, 1994).

\bibitem{DragovichReview}B.~Dragovich, A.\,Yu.~Khrennikov, S.\,V.~Kozyrev, and I.\,V.~Volovich, ``On p-adic mathematical physics'', {\it $p$-Adic Numbers, Ultrametric Analysis and Applications} {\bf 1}(1) 1--17 (2009); arXiv:0904.4205v1 [math-ph].

\bibitem{TrVolFuncRat}A.\,S.~Trushechkin and I.\,V.~Volovich, ``Functional classical mechanics and rational numbers'', {\it $p$-Adic Numbers,
Ultrametric Analysis and Applications} {\bf 1}(4) 361--367 (2009); arXiv:0910.1502 [math-ph].

\bibitem{TrushTMF10}A.\,S.~Trushechkin, ``Irreversibility and the role of an instrument in the functional formulation of classical mechanics'' {\it Theor. Math. Phys.} {\bf 164}(3) 1198--1201 (2010).

\bibitem{Landau10}E.\,M. Lifshitz, L.\,M. Pitaevskii, \textit{Physical Kinetics. Vol. 10 (Landau and Lifshitz Course of Theoretical Physics)} (Pergamon Press, Oxford, 1981).

\bibitem{Kozlov}V.\,V. Kozlov, \textit{Thermal equilibrium in the sense of Gibbs and Poincar{'e}} (Moscos, Izhevsk, Computer Reseach Institute, 2002) (in Russian).

\bibitem{Grad}H.~Grad, ``Principles of the Kinetic theory of Gases'', in {\it Handbuch der Physik}, ed.~S.~Fl\"ugge, Vol.12 (Springer-Verlag, Berlin--G\"ottingen--Heidelberg, 1958), 205--294.

\bibitem{KozTres}V.\,V.~Kozlov and D.\,V.~Treschev, ``Fine-grained and coarse-grained entropy in problems of statistical mechanics'' {\it Theor. Math. Phys.} {\bf 151}(1) 539--555 (2007).

\bibitem{Lebowitz}J.\,L.~Lebowitz, ``From time-symmetric microscopic dynamics to time-asymmetric macroscopic behavior: An overview'', arXiv:0709.0724 [cond-mat.stat-mech].

\bibitem{Spohn}H.~Spohn, ``Kinetic equations from Hamiltonian dynamics: Markovian limits'', {\it Rev. Mod. Phys.} {\bf 52}(3) 569--615 (1980).

\bibitem{DobrSinaiSukhov}R.\,L.~Dobrushin, Ya.\,G.~Sinai, and Yu.\,M.~Sukhov, ``Dynamical systems of statistical mechanics'', in {\it Dynamical  systems. II. Ergodic Theory with Applications to Dynamical Systems and Statistical Mechanics. Enciclopaedia Math. Sci.} Vol.~2 (Springer-Verlag, Berlin, 1982), 207--278.

\bibitem{DudSpohn}T.\,V.~Dudnikova and H.~Spohn, ``Local stationarity for lattice dynamics in the harmonic approximation'', {\it Markov Processes Relat. Fields} {\bf 12}(4) 645--678 (2006); arXiv:math-ph/0505031v1.

\bibitem{Dud}T.\,V.~Dudnikova, ``Lattice dynamics in the half-space. Energy transport equation'', {\it J. Math. Phys.} {\bf 51}(8) 083301 (25 pp.) (2010); arXiv:0905.4806v1 [math-ph].

\bibitem{Wigner}E.\,P.~Wigner, ``The unreasonable effectiveness of mathematics in the natural sciences'' {\it Comm. Pure Appl. Math.} {\bf 13}(1) 1--14 (1960).

\bibitem{Shelest}
A.\,V. Shelest, \textit{Bogolyubov's method in the dynamic theory of kinetic equations} (Moscow, Nauka, 1990) (in Russian).

\end{thebibliography}
\end{document}